\newtheorem{thm}{Theorem}
\begin{document}

\title{Fit and Vulnerable: Attacks and Defenses for a Health Monitoring Device}

\author{
\IEEEauthorblockN{Mahmudur Rahman, Bogdan Carbunar, Madhusudan Banik}
\IEEEauthorblockA{School of Computing and Information Sciences\\
Florida International University\\
Miami, Florida 33199\\
Email: \{mrahm004, carbunar, mbani002\}@cs.fiu.edu}
}

\maketitle

\begin{abstract}

The fusion of social networks and wearable sensors is becoming increasingly
popular, with systems like Fitbit automating the process of reporting and
sharing user fitness data. In this paper we show that while compelling, the
careless integration of health data into social networks is fraught with
privacy and security vulnerabilities. Case in point, by reverse engineering the
communication protocol, storage details and operation codes, we identified
several vulnerabilities in Fitbit. We have built FitBite, a suite of tools
that exploit these vulnerabilities to launch a wide range of attacks against
Fitbit. Besides eavesdropping, injection and denial of service, several attacks
can lead to rewards and financial gains. We have built FitLock, a lightweight
defense system that protects Fitbit while imposing only a small overhead. Our
experiments on BeagleBoard and Xperia devices show that FitLock's end-to-end
overhead over Fitbit is only 2.4\%.

\end{abstract}

\section{Introduction}

Online social networks are sites that enable their users to connect and share
information with friends and family. Recent advances in wearable, user-friendly
devices equipped with smart sensors (e.g., pedometers, heart rate and sleep
monitors) and wireless technologies, are facilitating the emergence of
\textit{social sensor networks} (SSNs): social networks that collect and share
not only explicit user information (e.g., status updates, location reports) but
also implicit health-centric data.

Fitbit~\cite{Fitbit}, a representative \textit{social sensor network} centers
its existence on fitness sensor data. It consists of (i) \textit{trackers},
wireless-enabled, wearable devices that record their users' daily step counts,
distance traversed, calories burned and floors climbed as well as sleep
patterns when worn during the night and (ii) an online social network that
automatically captures, displays and shares fitness data of its users.
Figure~\ref{fig:trend:all} illustrates the basic functionality of Fitbit.

While popular and useful in its encouragement of healthy lifestyles, the
combination of health sensors and social networks makes social sensor networks
the source of significant privacy and security issues. In this paper we show
that Fitbit is vulnerable to a wide range of attacks. Besides standard social
networking problems, including infiltration attacks~\cite{BMBR11} and private
data leaks to general account holders~\footnote{Fitbit has suffered
criticism due to its initial default access control settings: The reported
sensor information was made publicly available on Fitbit's social network.},
Fitbit is made vulnerable by the wireless nature of tracker communications and
poor security practices. 

Fitbit relies on a Personal Area Network (PAN) protocol~\cite{PAN} called
ANT~\cite{ANTref}, that enables trackers to automatically upload their data to
the online social network account of their user. The improper design of
Fitbit's communication protocol, (i) allows Fitbit users to engineer their
fitness data and inject it into their social networking accounts, thus gain
financial benefits and (ii) enables external attackers to intercept data
reported by trackers of other users, inject arbitrary data into the trackers
and online social network accounts of other users, as well as launch denial of
service attacks.

\begin{figure}
\centering
\includegraphics[width=2.2in]{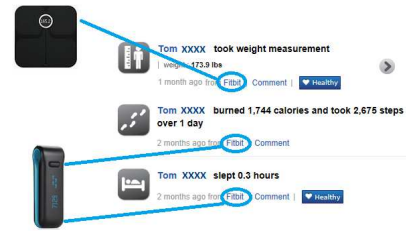}
\caption{Social Sensor Network (SSN) illustration. Health sensor devices and
corresponding data reported, displayed and shared with friends. The user's last
name is anonymized.
\label{fig:trend:all}}
\vspace{-15pt}
\end{figure}

In order to expose Fitbit's vulnerabilities, in a first contribution, we have
reverse engineered the semantics of tracker memory banks, the command types and
the tracker-to-social network communication protocol.  In a second
contribution, we have built FitBite, a suite of tools that exploit Fitbit's
faulty design. We have used FitBite to prove the feasibility of a wide range of
attacks. For instance, we show that FitBite allows attackers to capture and
modify the data stored on any tracker situated within a radius of 15 ft.

%

In a third contribution, we propose FitLock, a lightweight extension that uses
efficient cryptographic tools to secure the Fitbit protocol. We show that
FitLock prevents the FitBite attacks.  Our end-to-end implementation on
BeagleBoard~\cite{Beagle} and Xperia devices shows that the computation and
communication overhead imposed by FitLock is small: resource constrained
BeagleBoard and Xperia devices can support hundreds of packet encryption and
transmission operations per second. Moreover, the end-to-end overhead of the
cryptographic operations employed by FitLock over the standard Fitbit protocol
is only 2.4\% on a Xperia device.  The project website containing the source
code of FitBite and FitLock is made publicly available at ~\cite{Fitlock}.

The paper is organized as follows. Section~\ref{sec:model} describes the Fitbit
model, including background on the ANT protocol, and details the attacker model
considered. Section~\ref{sec:fitbit} reverse engineers Fitbit and
Section~\ref{sec:fitbite} introduces FitBite, the suite of attack tools.
Section~\ref{sec:fitlock} introduces FitLock, our defense extension and proves
its security. Section~\ref{sec:evaluation} describes our implementation
results. Section~\ref{sec:related} describes related work and
Section~\ref{sec:conclusions} concludes.

\section{Background and Model}
\label{sec:model}

The Fitbit system consists of user tracker devices, user USB base stations and
an online social network.

\noindent
{\bf Fitbit tracker.}
The Fitbit \textit{tracker} is a wearable device that relies on a 3D motion
sensor and a barometric pressure sensor to measure the daily steps taken,
distance traveled, floors climbed, calories burned, and the duration and
intensity of the user exercise. The tracker mainly consists of four IC chips,
(i) a MMA7341L 3-axis MEMS accelerometer, (ii) a MSP430F2618 low power TI MCU
consisting of 92 KB of flash and 96 KB of RAM, (iii) a nRF24API 2.4 GHz RF chip
supporting the ANT protocol (1 Mbits/sec, 15 ft transmission range) and (iv) a
MEMS altimeter to count the number of floors climbed. The user can switch
between displaying different real-time fitness information on the tracker, using
a dedicated hardware \textit{switch} button (see the arrow pointing to the
switch in Figure~\ref{fig:system:main}).  Each tracker has a unique id, called
the \textit{tracker public id} (TPI).


\noindent
{\bf Data conversion.}
The accelerometer and the altimeter allow the tracker to count the steps taken
and the floors climbed by the user. The tracker relies on extrapolated walk/run
stride length values to convert the step count into the distance covered by the
user: the sum of the recorded walking steps times the user walking stride
length and of the running steps times the user running stride length.  The
running steps are identified based on the frequency and intensity of the user's
steps. The tracker uses the extrapolated user Basal Metabolic Rate
(BMR)~\cite{BMR} values to convert the user's daily activities into the number
of calories burned. 




\begin{figure}
\begin{center}
\includegraphics[width=2.9in,height=1.8in]{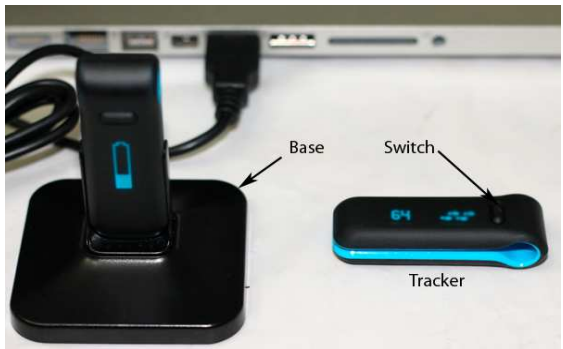}
\caption{Fitbit system components: trackers (one mounted on the base), the base
(arrow indicated), user laptop. The arrow pointing to the tracker shows the
switch button, allowing the user to display various fitness data.
\label{fig:system:main}}
\end{center}
\vspace{-15pt}
\end{figure}

\noindent
{\bf The base.}
The second component, the Fitbit \textit{base}, connects to the user's main
compute center (e.g., PC, laptop) and is equipped with a wireless communication
chip that enables it to communicate with any tracker within a range of 15 ft.
The base acts as a bridge between trackers and the online social network. It
sets up connections with all the trackers within its transmission range, then
reads and clears up the information stored on the tracker according to commands
issued by the social network. Figure~\ref{fig:system:main} shows a snapshot of
two trackers and a base, connected to a laptop through a USB port.

\noindent
{\bf The webserver.}
The third component, the online social network, allows users to create accounts
from which they befriend and maintain contact with other users. Upon purchase
of a Fitbit tracker and base, a user binds the tracker to her social network
account. Each social network account has a unique id, called the
\textit{user public id} (UPI). After the base detects and sets up
a connection with the tracker, the base automatically collects and reports
tracker stored information (step count, distance, calories, sleep patterns) to
the corresponding social network account. Based on user preferences, this data
is either made public, shared with the user's friends, or kept private. In the
following, we use the term \textit{webserver} to denote the computing resources
of the online social network.

\noindent
{\bf Tracker-to-base communication: the ANT protocol.}
Trackers communicate to bases over ANT. ANT is a 2.4 GHz bidirectional wireless
Personal Area Network (PAN) communications technology optimized for
transferring low-data rate, low-latency data between multiple ANT-enabled
devices.  The ultra-low power consumption of the ANT chipset guarantees an
extended battery life even from low-capacity supplies such as a coin cell
battery. This, along with the low implementation cost of ANT, enables its
integration and use in a wide range of mobile devices, including smartphones
and health sensors.

\subsection{Attacker Model}

We assume the existence of not only external attackers but also insiders.
External attackers attempt to learn and modify the fitness information reported
by the trackers of other users, as well as disrupt the Fitbit protocol.
Insiders own Fitbit trackers and may attempt to report fitness values that do
not reflect their effort, e.g., inflate reports or replay old values.  We
assume external attackers do not have physical access to trackers of other
users. However, attackers are able to capture wireless communications in their
vicinity. Furthermore, we assume that the Fitbit service (e.g. the social
network servers) does not collude with attackers to facilitate false data
reports.

\section{Reverse Engineering Fitbit}
\label{sec:fitbit}

We document here the results of our effort to reverse engineer the Fitbit
communication protocol, including the message communication format among 
the participating devices. Our endeavor has relied on information from
libfitbit~\cite{libfitbit} for open source health hardware access.

Fitbit uses \textit{service logs}, files that store information concerning 
communications involving the base. On the Windows installation of the Fitbit
software, daily logs are stored in cleartext
%
%
in files whose names record the hour, minute and second corresponding to the 
time of the first log occurrence. Each request and response involving the 
tracker, base and social network is logged and sometimes even documented in the
archive folder of that log directory. The logs have proved central to our
understanding and reverse engineering of the functionality of Fitbit.

Data retrieved from the tracker to be uploaded to the social network is encoded
in base64 format. However, no authentication is used, and all requests are sent
in clear HTTP.  We have exploited Fitbit's lack of encryption in the messages
sent between the base and the tracker to implement a USB based filter driver
that separately logs the data flowing to and from the base. 

In the following we present details of the organization of the tracker memory
banks, then describe the main Fitbit opcodes and then present the reverse
engineered Fitbit communication protocol.

\subsection{Memory Banks}

A tracker has two types of memory banks, (i) \textit{read banks}, containing
data to be read by the base and (ii) \textit{write banks}, containing data that
can be written by the base. The log data we captured reveals that during the 
upload session, the webserver reads data from 6 memory banks, writes on 2 write 
memory banks and clears data from 5 memory banks by sending requests to the 
tracker through the base. The byte length of memory banks varies. We now 
briefly describe the most important read and write memory banks.





\noindent
\textbf{Read banks}.
The read bank \#1 stores the daily user fitness records. Each record is 16
bytes long. It starts with a 4 byte long timestamp, followed by the number of
calories, steps, distance and floor count. Both steps and distance are stored
on four bytes while the calories and the floor count are stored on two bytes.
To ensure reliability, Fitbit stores the important fitness records (e.g., step
and floor count) on multiple memory banks.

\noindent
\textbf{Write banks.}
The write bank \#0 stores 64 bytes concerning the device settings as specified
on the user's Fitbit account (the ``Device Settings'' and ``Profile Settings''
links). The write bank \#1 stores 16 bytes that contain the daily user
fitness records whose data format is similar to the read memory bank 0.



\subsection{Opcodes and Responses}

The webserver communicates with a tracker through a base. The communication is
embedded in XML blocks, that contain base64 encoded opcodes -- commands for the
tracker. Opcodes are 7 bytes long. We briefly list below the most important
opcodes and their corresponding responses. The opcode types are also shown
in Figure~\ref{fig:fitbit:maincomm}.

\noindent
{\bf Retrieve device information (TRQ-REQ):} opcode $[0x24,0 (6 times)]$.
Upon receiving this opcode from the webserver (via the base), the tracker
includes in a reply its serial number (5 bytes), the hardware revision number,
%
%
and whether the tracker is plugged in on the base.
%

\noindent
{\bf Read/write tracker memory.}
To read a memory bank, the webserver needs to issue the READ-TRQ opcode,
$[0x22, index,0 (5 times)]$, where $index$ denotes the memory bank requested.
The response embeds the content of the specified memory bank. To write data to
a memory bank, the webserver issues the WRITE opcode $[0x23, index,
datalength,0 (4 times)]$.  The payload data is sent along with the opcode. The
value $index$ denotes the destination memory bank and $datalen$ is the length
of the payload. A successful operation returns the response $[0x41,0 (6
times)]$.

\noindent
{\bf Erase memory: (ERASE)} opcode $[0x25, index, time, 0]$.  The
webserver specifies the $index$ denoting the memory bank to be erased.
$time$ (4 bytes, MSB) is the operation deadline - the date until which the
data should be erased. A successful operation returns the response $[0x41,0(6
times)]$.

\subsection{The Fitbit Communication Protocol}

\begin{figure}
\begin{center}
\includegraphics[width=3.3in]{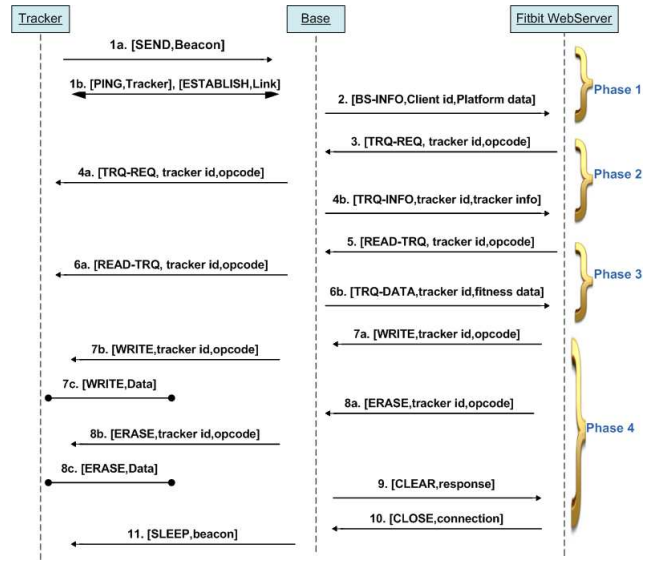}
\caption{Fitbit protocol between the tracker, base and the Fitbit webserver}
\label{fig:fitbit:maincomm}
\end{center}
\vspace{-15pt}
\end{figure}

In the following, for brevity, we use the notation ``URL'' to 
denote the full URL \url{http://client.fitbit.com}. The data flow between 
the tracker, base and the webserver during the data upload operation, 
illustrated in Figure~\ref{fig:fitbit:maincomm}, is divided into 4
phases, beginning at steps 2, 3, 5 and 7:

\begin{enumerate}

\item
Upon receiving a beacon from the tracker, the base establishes a connection
with the tracker.

\item
{\bf Phase 1:}
The base contacts the webserver at the \url{URL/device/tracker/uploadData} 
and sends basic client and platform information.

\item
{\bf Phase 2:}
The webserver sends the tracker id and the opcode for retrieving tracker
information (TRQ-REQ).

\item
The base contacts the specified tracker, retrieves its information TRQ-INFO
(serial number, firmware version, etc.) and sends it to the webserver at the
\url{URL/device/tracker/dumpData/lookupTracker}.

\item
{\bf Phase 3:}
Given the tracker's serial number, the webserver retrieves the associated
tracker public id (TPI) and user public id (UPI) values. The webserver sends to
the base the TPI/UPI values along with the opcodes for retrieving fitness
data from the tracker (READ-TRQ).


\item
The base forwards the TPI and UPI values and the opcodes to the tracker,
retrieves the fitness data from the tracker (TRQ-DATA) and sends it to the
webserver at the \url{URL/device/tracker/dumpData/dumpData}.

\item
{\bf Phase 4:}
The webserver sends to the base opcodes to WRITE updates provided by the user
in her Fitbit social network account (device and profile settings, e.g., body
and personal information, time zone, etc). This operation takes place
irrespective of whether the user has updated her settings since the last
communication with the tracker or not. The base forwards the WRITE opcode and
the updates to the tracker, who overwrites the previous values on its write
memory banks.

\item
The webserver sends opcodes to ERASE the fitness data from the tracker. The
base forwards the ERASE request to the tracker, who then erases the contents of
the corresponding read memory banks.


\item
The base forwards the response codes for the executed opcodes from the tracker to
the webserver at the \url{URL/device/tracker/dumpData/clearDataConfigTracker}.

\item
The webserver replies to the base with the opcode to CLOSE the tracker.

\item
The base requests the tracker to SLEEP for 15 minutes, before sending its next
beacon.

\end{enumerate}

\begin{figure}
\centering
\vspace{-5pt}
\includegraphics[width=3.1in]{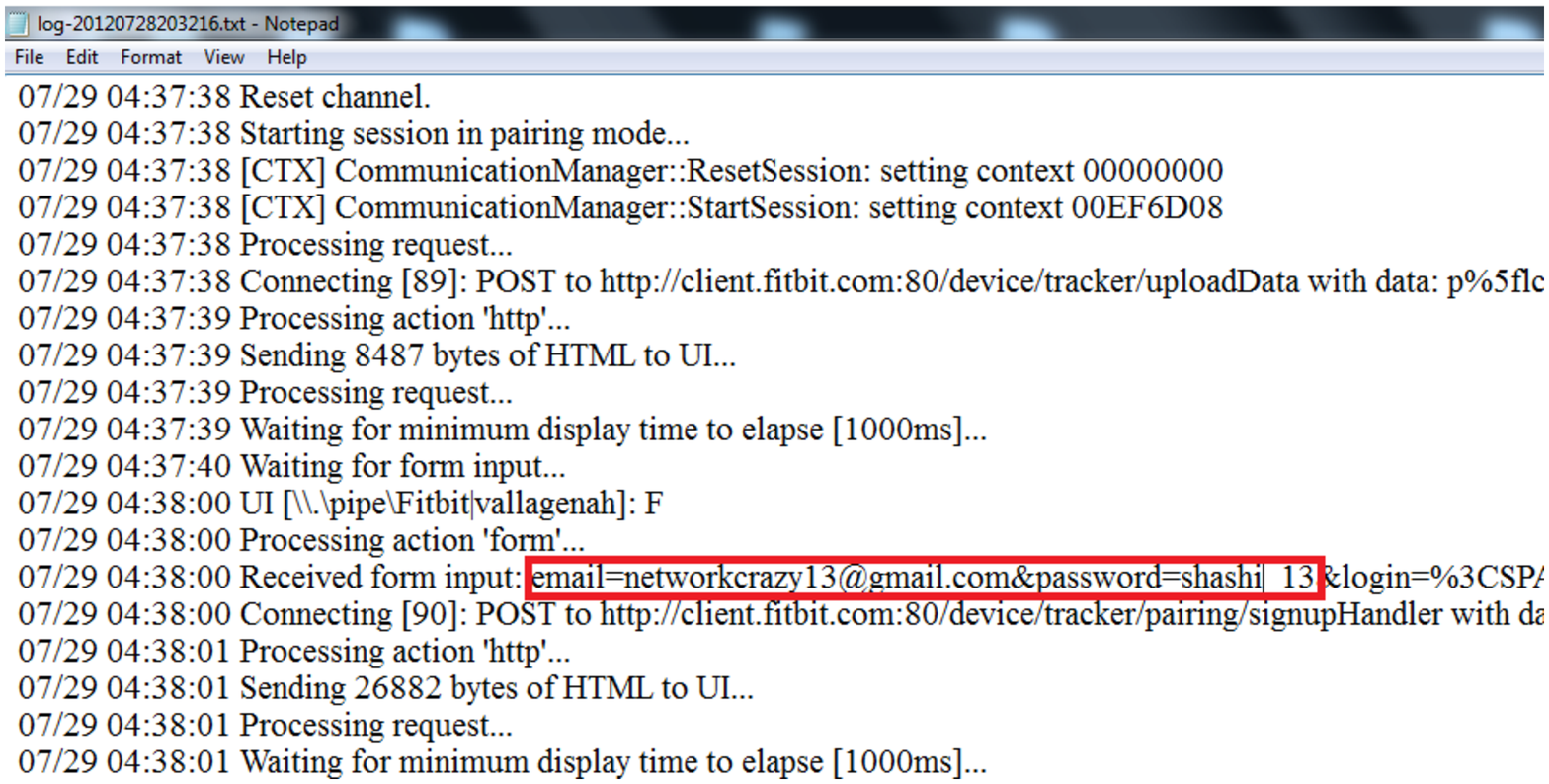}
\vspace{-5pt}
\caption{Fitbit service logs: Proof of login credentials sent in cleartext in a 
HTTP POST request sent from the base to the webserver.
\label{fig:service:log2}}
\vspace{-10pt}
\end{figure}


\section{FitBite: Attacking Fitbit}
\label{sec:fitbite}

We first describe two vulnerabilities of Fitbit, followed by details of the
attacks we have deployed to exploit these vulnerabilities.

\subsection{Vulnerabilities}

\noindent
{\bf Cleartext login information.}
During the initial user login via the Fitbit client software, user passwords
are passed to the website in cleartext (as part of POST data) and then stored
in the log files. Figure~\ref{fig:service:log2} shows a snippet of captured
data, with the cleartext authentication credentials emphasized.

\noindent
{\bf Cleartext HTTP Data processing.}
When syncing data to the website, no data protection/authentication is used --
all requests are sent over plain HTTP.
Capturing tracker data and injecting data into trackers and social
network accounts becomes thus possible.

\subsection{The FitBite Tool}

We have built FitBite, a suite of tools that exploit the above vulnerabilities
to attack Fitbit. FitBite consists of two modules. The Base Module (BM) is used
to retrieve data from the tracker, inject false values and upload them into the
account of the corresponding user on the webserver. The Tracker Module (TM) is
used to read and write the tracker data.  FitBite implements the
following attacks.

\noindent
{\bf Tracker Private Data Capture (TPDC).}
FitBite uses the TM module to discover any tracker device within a radius of 15
ft and capture the fitness information stored on the tracker.  This attack can
be launched in public spaces, particularly those frequented by Fitbit users
(e.g., parks, sports venues, etc).


\begin{figure}
\begin{center}
\includegraphics[width=2.7in]{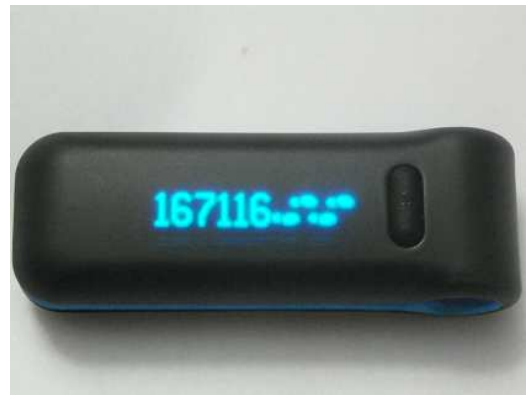}
\caption{Outcome of Tracker Injection (TI) attack on
Fitbit tracker.}
\label{fig:fitbit:pic3}
\end{center}
\end{figure}

\noindent
{\bf Tracker Injection (TI) Attack.}
FitBite uses the TM module along with knowledge of the data and memory bank
formats and required opcode instructions to modify any of the ``real-time''
fitness data stored on neighboring trackers. FitBite allows the attacker to
choose the data to be modified. It then reads the data from the storing memory
bank and modifies the target bytes while keeping the remaining locations
unmodified. The TM can act however modify simultaneously multiple fitness
records (memory banks).

Figure~\ref{fig:fitbit:pic3} shows an example of a victim tracker, displaying
an inflated value for the (daily) number of steps taken by its user. Note that
the tracker's owner (an insider) can also launch this attack.

\begin{figure}
\begin{center}
\includegraphics[width=3.1in]{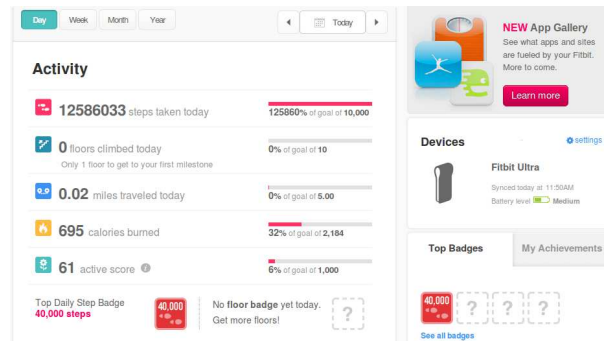}
\caption{Snapshot of Fitbit user account data injection attack.}
\label{fig:attack:inject}
\end{center}
\vspace{-15pt}
\end{figure}

\noindent
{\bf User Account Injection (UAI) Attack.}
Fitbit allows a tracker to report its data to the user's social network account
through any USB base in its vicinity (15 ft. radius).  Specifically, in step
6.b of the Fitbit protocol (see Figure~\ref{fig:fitbit:maincomm}) the base
sends the data to the webserver at the
\url{URL/device/tracker/dumpData/dumpData}.  FitBite enables an attacker to
hijack the data reported by trackers in its vicinity, through the attacker's
corrupt USB base.  FitBite uses the BM module to launch the data injection
attack by fabricating a data reply embedding the desired fitness data (encoded
in the base64 format). The BM sends the reply as an XML block in an HTTP
request to the web server. The webserver does not authenticate the request
message and does not check for data consistency -- thus it accepts the data.


Figure~\ref{fig:attack:inject} shows a snapshot of one account where we have
successfully injected the number of steps taken by the ``account owner'', while
keeping the other values intact. This shows that (i) FitBite can inject an
unreasonable daily step count (12.58 million) into the account of any tracker
owner located in its vicinity and (ii) Fitbit does not check data
consistency -- the 12.58 million steps are shown to correspond to 0.02 traveled
miles.

\noindent
{\bf Free Badges.}
By successful injection of large values in their social networking accounts,
FitBite enables insiders to achieve special milestones and acquire merit
badges, without doing the required work. Figure~\ref{fig:attack:inject} shows
that the injected value of 12.58 million steps, being greater than 40,000,
enables the account owner to acquire a ``Top Daily Step'' badge.

\noindent
{\bf Free Financial Rewards.}
Fitbit users can link their social networking accounts to systems that reward
users for exercising, e.g., Earndit~\cite{Earndit} provides gift cards and
financial prizes. An Earndit user receives 0.75 points for each of her ``very
active'' Fitbit minute and 0.10 points per a ``fairly active'' minute. By keeping
the BM module running and continuously updating the tracker data (once each 15
minutes), FitBite allows an insider to easily record ``fairly active''
minutes. We have created an Earndit account, linked it to one of our Fitbit
accounts and used FitBite to accumulate a variety of undeserved rewards.
Figure~\ref{fig:attack:earndit} shows an example where we have accumulated 200
Earndit points, that can be redeemed to a \$20 gift card.

\begin{figure}
\begin{center}
\includegraphics[width=3.1in,height=1.9in]{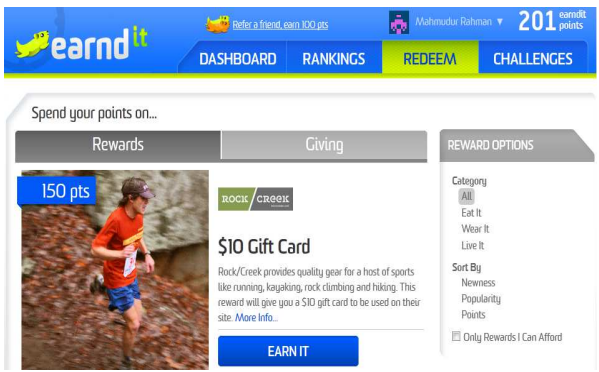}
\caption{Earndit points and available gift cards}
\label{fig:attack:earndit}
\end{center}
\vspace{-15pt}
\end{figure}

\noindent
{\bf Battery Drain Attack.}
FitBite allows the attacker to continuously query trackers in her vicinity,
thus drain their batteries at a faster rate. To understand the efficiency of
this attack, we have experimented with 3 operation modes. First, the
\textit{daily upload} mode, where the tracker syncs with the USB base and the
Fitbit account once per day. Second, the \textit{15 mins upload} mode, where we
kept the tracker within 15 ft. of the base, thus allowing it to be queried once
every 15 minutes. Finally, the \textit{attack} mode, where FitBite's TM module
continuously (an average of 4 times a minute) queried the victim tracker. In
order to not raise suspicions, the BM module uploaded tracker data into the
webserver only once every 15 minutes.

\begin{figure}
\begin{center}
\includegraphics[width=3.1in,height=1.9in]{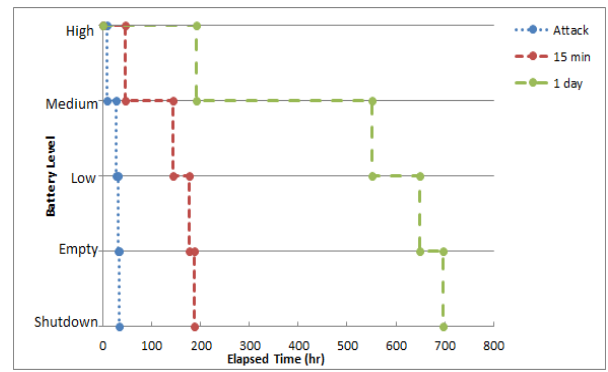}
\caption{Battery drain for three operation modes. The \textit{attack} mode
severely reduced the battery lifetime.}
\label{fig:fitbit:battery}
\end{center}
\vspace{-15pt}
\end{figure}

Figure~\ref{fig:fitbit:battery} shows our battery experiment results for the
three modes. In the daily upload mode, the battery lasted for 29 days.  In the
15 mins upload mode, the battery lasted for 186.38 hours (7 days and 18 hours).
In the attack mode, the battery lasted for a total of 32.71 hours. While this
attack is not fast enough to impact trackers targeted by casual attackers, it
shows that FitBite drains the tracker battery around 21 times faster than the 1
day upload mode and 5.63 times faster than the 15 mins upload mode.

\noindent
{\bf Denial of Service.}
FitBite's injection attack can be used to prevent users from correctly updating
their real-time statistics. A tracker can display up to 6 digit values.  Thus,
when the injected value exceeds 6 digits, the least significant digits can not
be displayed on the tracker. This prevents the user from keeping track of her
daily performance evolution.

\begin{figure*}
\centering
\vspace{-5pt}
\subfigure[]
{\label{fig:attack:rope}{\includegraphics[width=2.3in,height=1.5in]{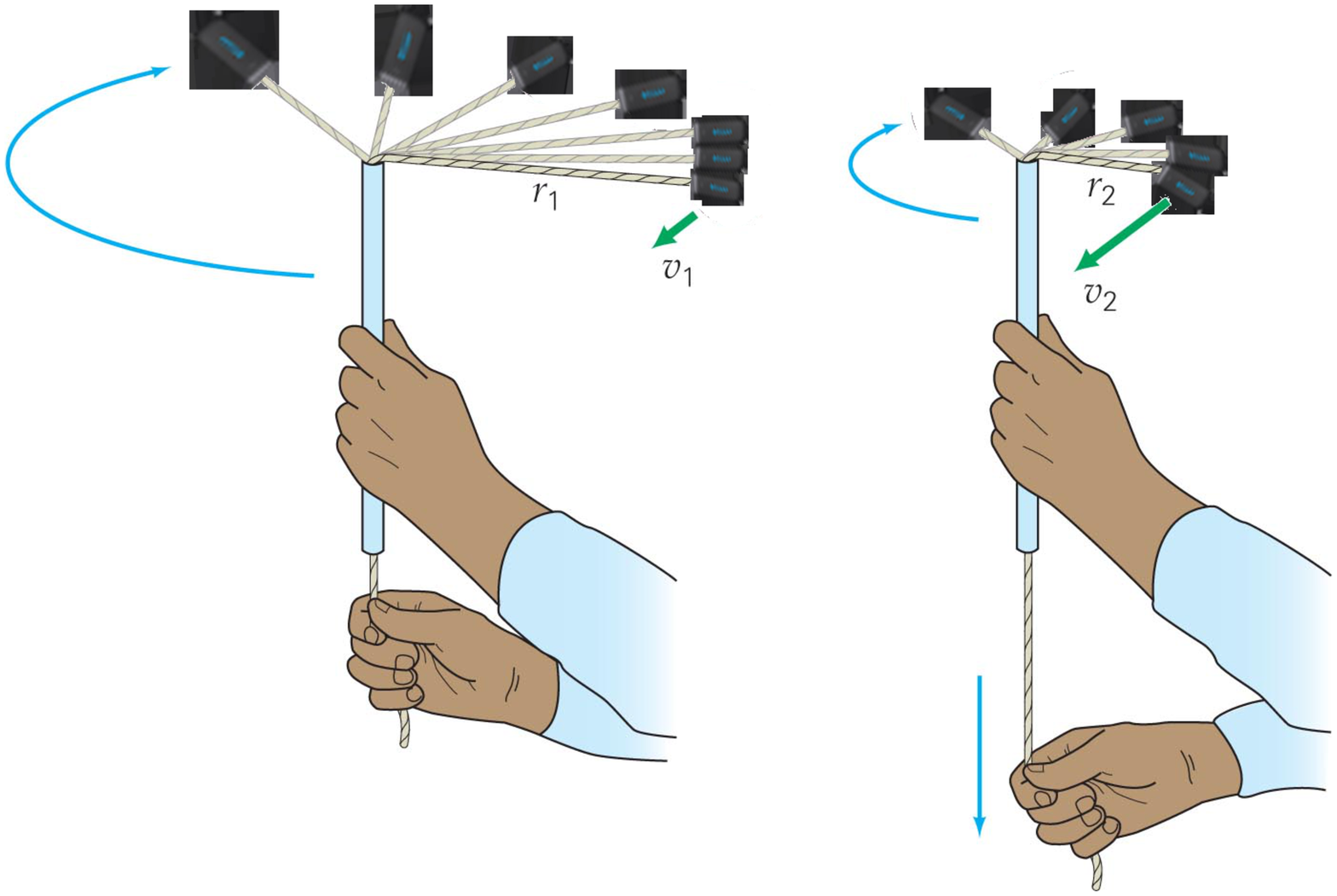}}}
\vspace{-5pt}
\subfigure[]
{\label{fig:attack:physical}{\includegraphics[width=1.9in,height=1.5in]{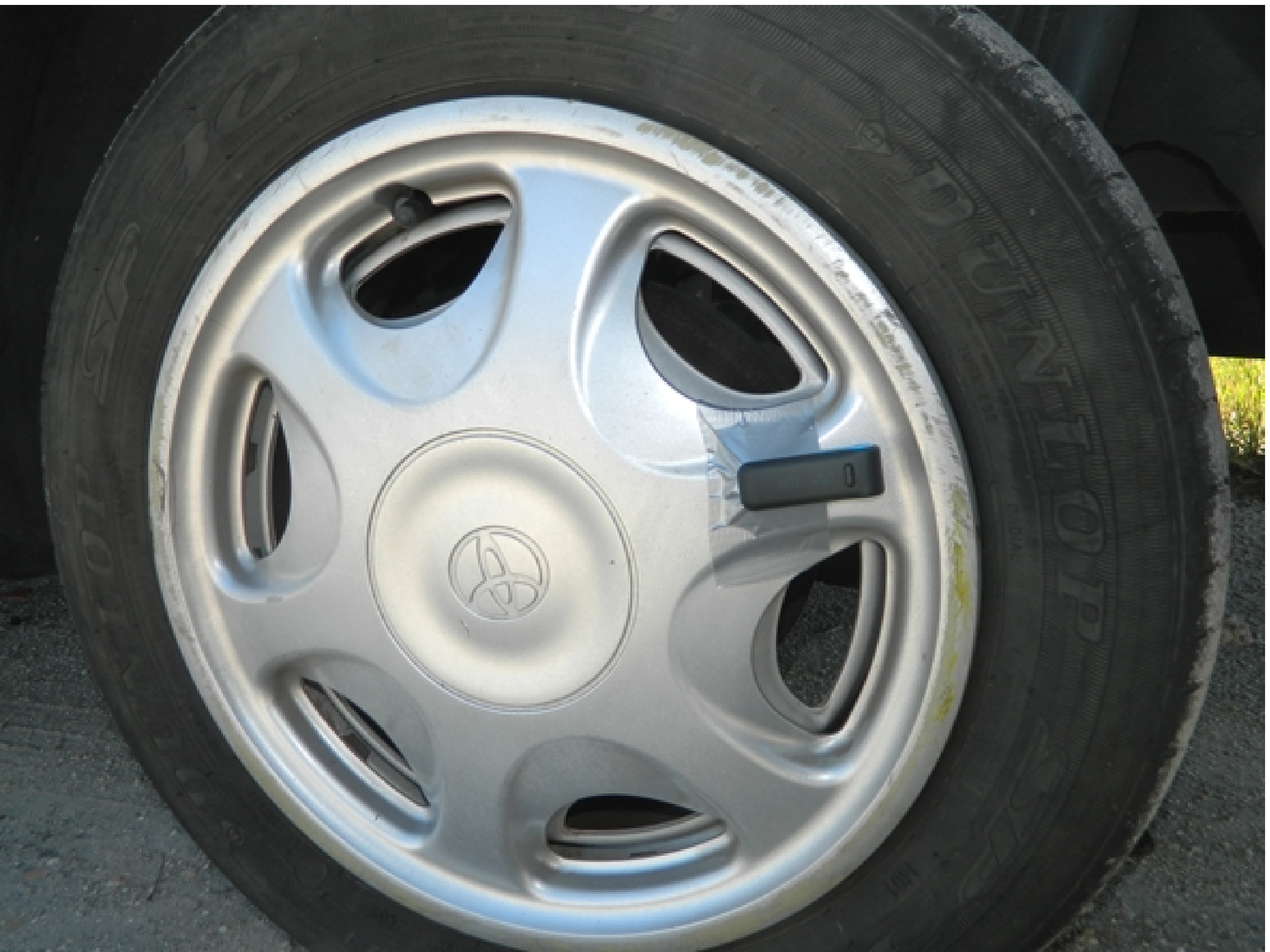}}}
\vspace{-5pt}
\subfigure[]
{\label{fig:attack:wheel}{\includegraphics[width=2.7in,height=1.79in]{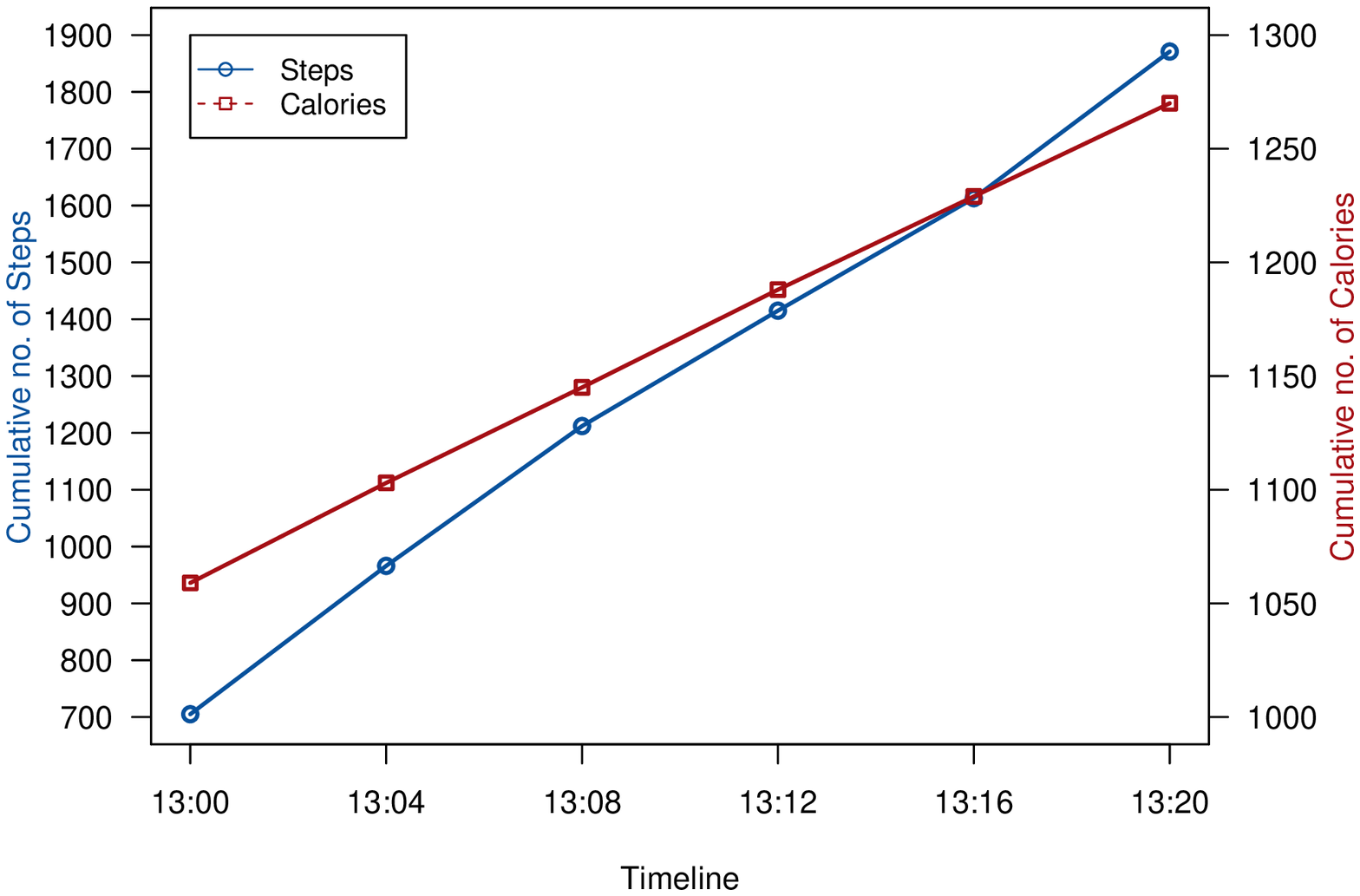}}}
\caption{Physical attack:
(a) Spinning a tracker, when tied with different rope lengths.
(b) Attaching tracker on car wheel.
(c) Effectiveness of wheel attack: evolution of tracker recorded number of steps and calories in time.}
\vspace{-5pt}
\end{figure*}

\noindent
{\bf Mule Attacks.}
Besides attacks exploiting Fitbit's unprotected wireless communications,
adversaries may also launch physical, \textit{mule} attacks, by attaching
trackers to various moving objects. This enables the adversary to increase
fitness parameters with significantly less effort than walking. In a first,
\textit{rope} attack, the adversary spins the tracker attached to a rope
(Figure~\ref{fig:attack:rope} illustrates this attack).  Our experiments show
that the step count increase in the rope attack is a function of the rope
length. For a 1ft rope, the step count increases by 1 for each circumvolution;
for a 2ft rope length, the step count increases by 2 per circumvolution.

While the rope attack requires perseverance, in a second, \textit{wheel}
attack, the adversary attaches a tracker to a car wheel. This enables the
attacker to effortlessly increase the recorded step count, distance and calorie
values when driving.  Figure~\ref{fig:attack:physical} shows a picture of our
``testbed''. We have experimented with this setup, by driving the car over
several 20 minutes sessions.  Figure~\ref{fig:attack:wheel} shows the outcome
of our experiment, in terms of the number of steps and the calories recorded by
the tracker as a function of time, displayed with a 4 minute granularity. At
the end of the experiment, the tracker recorded 1166 steps, 211 calories and
0.9 miles (1.44km).

We used the ``My Tracks'' Android application~\cite{Mytrack} which relies on
GPS readings to measure the average speed and distance traversed by the car.
At an average speed of 16.53kmh, the tracker increases the step count by
approx.  58 steps per minute. The tire type is P17565R14, with a radius of
11.48 inches and a circumference of 72.12 inches ($\approx$ 1.83m). The tracker
was placed 5.85 inches apart from the center of the tire. Thus, for the
circular path of the tracker, the circumference is 36.76 inches ($\approx$
0.93m).

The average distance covered by the car for each 20 minutes session was 5.51km,
when the tire rotated 5510/1.83=3010 times. Given the tracker's circumference
(0.93m), the tracker actually travels 2.8km for the 3010 times tire rotations.
The running stride length of the user bonded with the test tracker is approx.
0.9m, which converts the 2.8km into 3111 steps.  While these values (2.8km/3111
steps) are inconsistent with the 1.44km/1166 steps recorded by the tracker, we
note that the tracker's values are consistent: the tracker converts the steps
into distance according to the user stride length.

%
%
%

\section{FitLock: Protecting Fitbit}
\label{sec:fitlock}

A good Fitbit solution needs to (i) protect against internal and external
attackers, by authenticating the system participants, ensuring the
confidentiality, integrity and freshness of system information, and preventing
denial of service attacks, while simultaneously (ii) taking into consideration
the extreme resource limitations of Fitbit trackers. In this section we
introduce FitLock, a solution that secures the Fitbit system and is efficient in
terms of the imposed computation, storage and communication overheads.

\subsection{The Solution}

FitLock consists of a \textit{bind} procedure (BindUserTracker), where the user
associates a new tracker to her online social network account and an
\textit{upload} procedure (UploadData), where the tracker reports information
upon demand from the social network. Each tracker T has a unique serial number
$id_T$ and a secret symmetric encryption key $sk_T$, shared with the webserver.
These values are stored in a write-once-read-many (WORM) area of the tracker's
memory banks. The tracker never reveals (e.g., displays or communicates) the
secret key.  The webserver stores a database $Map$ that associates a tracker id
to tracker related data, including symmetric key, user id and session id.
Initially, $Map$ only maps tracker ids into corresponding symmetric encryption
keys.

Let $Id_A$ denote the unique user id of the account that user A has on the
Fitbit social network.  In the following, we use the notation
$F(P_1(args_1),..,P_n(args_n))$ to denote a protocol $F$ running between
participants $P_1$,..$P_n$, each with its own input arguments.  For instance,
the following BindTrackerUser protocol involves user A, with her account id and
a time interval $s$ as input arguments, her tracker T, with its id and secret
key as arguments, her base B with no arguments and the (Fitbit) webserver WS,
with its $Map$ structure as input argument. The BindTrackerUser protocol allows
user A to bind her new tracker to her social network account (illustrated in
Figure~\ref{fig:fitlock:bind}).

{\bf BindTrackerUser}(A($Id_A$,s),T($id_T,sk_T$),B(),WS(Map)).
User A logs in into her account on the Fitbit social network (step 1 in
Figure~\ref{fig:fitlock:bind}). A presses T's switch button for $s$
seconds (step 2). Upon this action, the tracker T reports its identifier $id_T$
in cleartext to WS, through the user's base (step 3). WS uses the Map
structure to retrieve the symmetric key associated with the $id_T$, i.e., $sk_T$
(step 4). It then generates a 6 digit long random value, $N$ (step 5). WS sends
to T the request value
\[
id_T, E_{sk_T}(``WS'', Time, N),
\]
where $Time$ is WS's current time (step 6). WS keeps track of all requests sent to
trackers and pending responses, indexed under the tracker id and the nonce
value.  WS associates an expiration time with each entry, and removes entries
as they expire without being answered.

Upon reception of this message, T uses its symmetric key, $sk_T$, to decrypt
it. It verifies the freshness (the $Time$ value) and authenticates WS through
its ability to have encrypted this message, containing the string ``WS'', using
the key $sk_T$.  If the verifications succeed, the tracker displays the 6 digit
random nonce $N$ (step 7). User A reads and enters this nonce into a
confirmation box in her Fitbit social network account (step 9). Then, if WS
finds any pending (not expired) request matching the value entered by the user,
WS associates $Id_A$ to $id_T$ and $sk_T$ in the $Map$ structure (step 10). WS
removes this request from the list of pending requests.


The following procedure, UploadData, is used to secure the Fitbit communication
protocol described in Section~\ref{sec:fitbit}. It involves a tracker T (taking
as arguments its id $id_T$, secret key $sk_T$, stored fitness $data$,
expiration interval $\delta$t and retry counter $r$), a base B (with no
arguments) and the webserver WS (with its $Map$ structure and the same
expiration intervals and counter as T).

All communication between T and WS is encrypted with their shared key $sk_{T}$.
Each communication session between WS and T has a monotonically increasing
session id $S_{wst}$. T and WS do not accept messages with older session id
numbers.

\begin{figure}
\begin{center}
\includegraphics[width=2.7in]{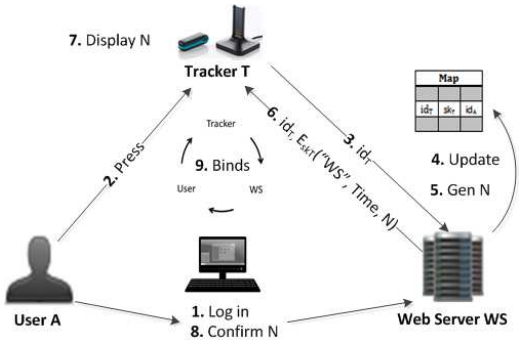}
\caption{The BindTrackerUser protocol between the user, tracker and the (Fitbit)
webserver}
\label{fig:fitlock:bind}
\end{center}
\end{figure}

\noindent
{\bf UploadData}(T($id_T$,$sk_T$,data,$\delta$t,r),B(),WS(Map,$\delta$t,r)).  A
new session starts only after the tracker's beacon is received by the base and
the base sets up a connection with the tracker (step 1).  Within each session,
the communication between WS and T starts with a request from WS followed by a
response from T. Each request contains a request type REQ $\in$ \{TRQ-REQ,
READ-TRQ, WRITE, ERASE, CLOSE\} (see Figure~\ref{fig:fitbit:maincomm}), and a
counter $C_{ws}$ encoding the number of times this particular request has been
re-transmitted.  Within a session, T stores the latest $C_{ws}$ received from
$WS$ for any request type, or -1 if no request has been received yet.  Thus, a
request from WS to T has the format
\[
id_T, E_{sk_T}(REQ,S_{wst},C_{ws}),
\]
where $S_{wst}$ is the current session id and $C_{ws}$ is set to 0 for the
first transmission of the current REQ type. Upon receiving such a message, the
base B uses $id_T$ to route the packet to the correct tracker T in its
vicinity.  T uses its secret key to decrypt the packet and authenticate WS:
verify that the first field is a meaningful request type, the second field
contains the current session id and the value of the third field exceeds its
currently stored value for REQ. If either verification fails, T drops the
packet. Otherwise, T stores the received $C_{ws}$ value, associated with the
REQ type for the current session, and replies to this request with
\[
id_T, E_{sk_T}(RESP,S_{wst},C_T),
\]
where RESP $\in$ \{TRQ-INFO, TRQ-DATA, CLEAR\} denotes T's response type (see
Figure~\ref{fig:fitbit:maincomm}) and $C_T$ is its counter (initialized to 0).

WS waits a predefined interval $\delta$t to receive the reply RESP from T. If
it does not receive it in time, WS repeats the request, with an incremented
counter $C_{ws}$.
%
%
If WS's re-transmission counter reaches a maximum value, $r$, and no
corresponding RESP is received within the $\delta$t interval, WS increments the
session id $S_{wst}$.  Similarly, if C's re-transmission counter reaches the
maximum value $r$ and the next request is not received from WS, T increments
$S_{wst}$. This means that T and WS consider themselves to have been
disconnected and their next communication needs to start from the beginning
(step 1 of Figure~\ref{fig:fitbit:maincomm})) with a new session id.  If T
receives a REQ from WS that has a session id larger (by 1) than its current
session id, T drops the data associated with the current session, and begins a
new session with the incremented session id.

At the successful completion of a session, both T and WS increment the session
id $S_{wst}$. WS stores this value in Map indexed under $id_T$.




\subsection{Data Consistency}

As mentioned in Section~\ref{sec:model}, there exists a strong relationship
between the different activity parameters tracked by Fitbit. However, as
demonstrated by the UAI attack in Section~\ref{sec:fitbite}, Fitbit does not
verify the consistency of the data reported by trackers. FitLock addresses this
vulnerability: Whenever new user data is uploaded on the webserver, FitLock
uses the walk/run stride length and the BMR values to verify the relations
between the number of steps (walking and running) and the distance traversed
and the calories burned by the user. If the relations do not hold (including an
error margin), FitLock considers that the data has been victim of an injection
attack.

\subsection{Analysis}

We now prove several properties of FitLock.

\begin{thm}
Without physical access to the tracker, an attacker cannot hijack the tracker
during the $BindTrackerUser$ procedure.
\end{thm}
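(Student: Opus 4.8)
The plan is to reduce the notion of ``hijacking'' to the attacker's ability to learn a single secret --- the nonce $N$ --- and then to show that this secret is out of reach under the model of Section~\ref{sec:model}. By \emph{hijack} I mean that the attacker causes WS to execute step~10 of BindTrackerUser with $Id_A$ instantiated to an account $Id_X$ the attacker controls, i.e., WS writes the association $(id_T, sk_T, Id_X)$ into $Map$. By construction WS performs this write only if the account holder who submits the confirmation value --- here, the attacker logged into $X$ --- supplies a value $N'$ matching some non-expired pending request indexed under $id_T$, and every such pending request carries the value $N$ that WS drew uniformly from the $10^6$ six-digit strings in step~5; WS accepts iff $N' = N$. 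Hence the theorem follows once we establish that the attacker cannot obtain $N$ except by a blind guess.

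First I would enumerate the channels on which $N$ can leave WS. It is minted locally at WS, which does not collude with the attacker, so it is never handed over directly. It travels on the wire only inside the step~6 message $id_T, E_{sk_T}(``WS'', Time, N)$, and it appears in cleartext only on T's display in step~7, and then only after T decrypts and verifies that ciphertext. An eavesdropper therefore sees the cleartext $id_T$ and the ciphertext $E_{sk_T}(``WS'', Time, N)$. But $sk_T$ is written once into a WORM region of T's memory, is never displayed or transmitted by T, and is otherwise held only by the non-colluding WS; so the attacker does not possess $sk_T$, and by semantic security of the symmetric scheme $E$ the ciphertext discloses nothing about $N$. Lacking physical access to T, the attacker also cannot read $N$ off T's display. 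No further channel exposes $N$, so the attacker is reduced to guessing.

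Then I would bound the guessing success probability. A single submitted value matches a fixed pending nonce with probability $10^{-6}$, since $N$ is uniform over six-digit strings; because WS attaches an expiration to each pending request and deletes a request on a match, the attacker gets only boundedly many attempts per request. The only way to obtain a fresh target nonce is to make WS open a new pending request: the attacker can do this even without the switch press of step~2, because it learned $id_T$ by eavesdropping and the base-to-WS link is unauthenticated cleartext, so it can forge the tracker announcement of step~3 --- but each forged announcement causes WS to draw a fresh, independent $N$, leaving the attacker's total success probability at most $q \cdot 10^{-6}$ where $q$ is the number of forged requests, which is negligible under any reasonable bound on, or monitoring of, the request rate. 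Hence, without physical access, the attacker hijacks T only with negligible probability.

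The step I expect to be the real obstacle is precisely this last case: one must argue carefully that impersonating the tracker's cleartext beacon of step~3 buys the attacker nothing, since WS still ships $N$ only under $E_{sk_T}$ and T still reveals $N$ only on its display, which --- absent physical access --- is seen only by the tracker's legitimate holder; a replayed stale step~6 message fares no better, both because T's $Time$ check rejects it and because even a re-displayed old $N$ is invisible to the attacker. A secondary point to pin down is the intended scope of ``physical access'': the argument needs it to rule out reading T's screen (shoulder-surfing), not merely touching T; and it is worth stating explicitly that coaxing the legitimate owner into typing the displayed $N$ into the attacker's confirmation box is a phishing attack lying outside the attacker model.
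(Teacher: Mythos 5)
Your proof is correct and follows essentially the same route as the paper's: the paper likewise reduces hijacking to learning the nonce $N$, observes that without physical access the attacker cannot read it off the display, and invokes semantic security of $E_{sk_T}$ to rule out recovering it from the captured step~6 ciphertext (what the paper calls a ``rush attack''). Your version is more thorough --- it adds the $10^{-6}$ guessing bound, the forged-beacon and stale-replay cases, and the shoulder-surfing/phishing caveats, none of which the paper's proof addresses explicitly --- but the core argument is the same.
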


\begin{proof}
A \textit{tracker hijack} attack, takes place during a normal execution of the
$BindTrackerUser$ procedure by a victim user for her tracker T. The adversary
attempts to bind the victim tracker T to another user account, potentially
controlled by the attacker.  Let $M$ denotes the Fitbit account owned by the
adversary.  Without physical access to the tracker, the adversary cannot read
the 6 digit random nonce displayed on the tracker and upload it in $M$.

However, the adversary is able to capture packets exchanged by WS and T during
a $BindTrackerUser$ procedure. The adversary could then attempt launch a
\textit{rush} attack. In a rush attack, the adversary decrypts a captured
packet, recovers the nonce $N$ sent by WS to T, and uploads it in $M$,
before the valid user.

Rush attacks are prevented by the semantic security of the encryption scheme of
FitLock -- the adversary cannot recover the nonce.
\end{proof}

FitLock prevents the TPDC attack through the use of semantically secure
encryption. The non-malleability of the encryption also prevents injection
TI, UAI and ensuing free badge and financial rewards
attacks, generated from previously captured (encrypted) messages. The use of
session identifiers and re-transmission counters prevents replay attacks.

\begin{thm}
FitLock prevents DoS and Battery Drain attacks.
\end{thm}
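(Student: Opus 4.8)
The plan is to reduce both attacks to a single fact: under the attacker model of Section~\ref{sec:model} (eavesdropping and injection of wireless packets, no physical access to the victim tracker, no collusion with the Fitbit service), the adversary cannot produce any message that the tracker T will act upon during \textbf{UploadData}. First I would recall the two mechanisms of \textbf{UploadData}: (i) every request $id_T, E_{sk_T}(REQ,S_{wst},C_{ws})$ delivered to T is encrypted under the shared key $sk_T$, and T drops it unless, after decryption, the first field is a meaningful request type, the second field equals T's current session id $S_{wst}$, and the third field strictly exceeds the last counter T stored for that request type in the current session; and (ii) T transmits a response only in reaction to such an accepted request, and retransmits at most $r$ times. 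Since the energy-expensive action of T is precisely a response transmission, it suffices to bound how often the adversary can trigger one.

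For the \textbf{Battery Drain} attack I would argue that an adversarial base can never cause T to transmit more often than the genuine webserver does. Any injected packet is either (a) a freshly crafted ciphertext or (b) a replay of a captured genuine request. Case (a) fails T's verification except with negligible probability: the adversary knows neither $sk_T$ nor — since all WS--T traffic is encrypted — the current $S_{wst}$ that must appear in the plaintext, so by the security of the encryption scheme it cannot produce a ciphertext decrypting to a well-formed $(REQ,S_{wst},C_{ws})$ triple with the right session id and an admissible counter. Case (b) fails the monotonicity checks: a replay from the current session carries a counter no larger than the one T already stored, and a replay from an earlier session carries a stale $S_{wst}$; either way T drops it. Hence T performs at most one cheap decryption per injected packet and never responds to the adversary, so its transmission schedule — and therefore its battery lifetime — coincides with the attack-free case, where requests arrive only on the genuine 15-minute cycle. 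I would also note that because the adversary cannot complete a session, T's retry counter for the stalled session reaches $r$, T increments $S_{wst}$, and returns to its normal beacon/sleep behaviour, so the adversary cannot even keep T awake.

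For the \textbf{DoS} attack I would treat two flavours separately. (1) \emph{Session disruption}: the only events that alter T's session state are acceptance of a valid request (which advances $C_{ws}$, or $S_{wst}$ at a clean session end) and receipt of a valid request whose session id is exactly one greater than T's current one (which makes T abandon the in-progress session); both require a valid ciphertext under $sk_T$ with an adversarially chosen session id, which is infeasible by the argument above, so the adversary cannot make T drop a legitimate upload. (2) \emph{Display overflow / inconsistent data} (the FitBite DoS): writing oversized values onto T needs a valid WRITE request under $sk_T$ — impossible by non-malleability, and replayed WRITE messages are rejected by the session-id and counter checks — while injecting oversized values into the social-network account is caught by FitLock's data-consistency check (the stride-length and BMR relations described in Section~\ref{sec:fitlock}). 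Combining the two flavours yields the claim.

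I expect the main obstacle to be pinning down exactly which security property of $E$ is invoked. Plain semantic security does not formally rule out that some adversarially chosen ciphertext decrypts to a string whose prefix happens to be a valid request type and whose session-id field matches T's current $S_{wst}$; what is really needed is ciphertext integrity, i.e.\ authenticated encryption, so that no fresh ciphertext is ever accepted — alternatively a careful counting argument showing that, given the fixed message encoding, such a coincidental decryption occurs only with negligible probability. I would resolve this by stating that FitLock instantiates $E$ with an authenticated (IND-CCA / INT-CTXT) scheme. A secondary caveat worth making explicit is scope: the theorem is meant against the stated attacker, not against a physical-layer jammer of the ANT channel, since jamming trivially yields DoS and, by forcing repeated retransmissions and session restarts, extra tracker transmissions as well.
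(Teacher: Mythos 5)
Your core argument is the same as the paper's: the tracker only expends energy responding to requests it accepts, fresh forgeries fail the decryption/format check because the adversary lacks $sk_T$ (and the current session id), and replays are killed by the monotonic session-id and counter checks, so the adversary never elicits a response and cannot perturb the session state. The paper's own proof is a two-sentence sketch saying exactly this and nothing more. You go beyond it in two useful ways. First, you decompose DoS into session disruption versus data-injection/display overflow, and for the latter you correctly pull in FitLock's data-consistency check (stride length and BMR relations) as part of the defense --- the paper's proof never connects that mechanism to this theorem, even though the FitBite DoS attack is precisely an injection of oversized values. Second, and more importantly, you flag that \emph{semantic security}, which is the only property the paper invokes, does not by itself guarantee that an adversarially crafted ciphertext fails to decrypt to a well-formed $(REQ, S_{wst}, C_{ws})$ triple; what is actually needed is ciphertext integrity, i.e., authenticated encryption. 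This is a genuine imprecision in the paper (whose evaluation uses RC4/Salsa20/AES with no MAC, so the concern is not merely formal), and your proposed fix --- instantiate $E$ with an INT-CTXT scheme, or argue that an accidental well-formed decryption hitting the correct session id is negligibly likely under the fixed encoding --- is the right one. Your scoping caveat about physical-layer jamming is also apt, since the theorem as stated cannot hold against a jammer and the paper is silent on this.
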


\begin{proof}(Sketch)
FitLock's use of semantically secure symmetric encryption to protect
communications, prevents attackers from obtaining a response from trackers.
The attacker cannot replay requests with old session ids or old counters (for
the current session id): Upon receiving invalid requests or requests with old
session ids or old counter values, the tracker drops them, thus does not
consume power to answer them.
%
%
\end{proof}

\noindent
{\bf Thwarting mule attacks.}
The sensors present in Fitbit trackers are insufficient to prevent insider, mule
attacks: the adversary has control over the tracker and the step count recorded
by the tracker is consistently converted into distance and calorie values. We
propose however two defenses against this attack, relying on the addition of
new sensors on Fitbit trackers.

A first solution relies on GPS chips installed in Fitbit trackers. Broadcom
offers BCM4752~\cite{BCM}, an inexpensive, energy efficient (it uses 50\% less
power than equivalent receivers), small (takes up nearly half the size of
comparable chips) and performant -- delivers 10 times more accurate readings,
and works indoors. By comparing the distance recorded by the GPS receiver
against the distance recorded by the tracker, we can discover inconsistencies
both for rope and wheel attacks. During a rope attack, the GPS location does
not change. During a wheel attack the GPS location changes too much compared
to the number of recorded steps.

A second solution relies on the inclusion of a small heart-rate monitor (HRM)
which can be well suited inside the Fitbit tracker (e.g., the AD8232
AFE~\cite{HRM} which comes in a 4 $\times$ 4-mm sized package).  The HRM device
measures cardiovascular electrical signals from the heart and the tracker only
records the user's activity if it gets such signals from the user. It ensures
that the user is actually wearing the tracker, thus trivially preventing rope
or wheel attacks.

%
%
%
%
%

\section{Evaluation}
\label{sec:evaluation}

\subsection{Experimental Setup}

\begin{figure}
\begin{center}
\includegraphics[width=3.2in]{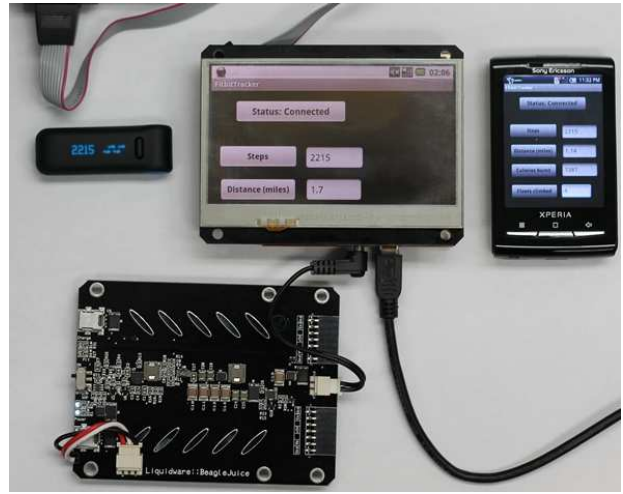}
\caption{Snapshot of testbed for FitLock, consisting of BeagleBoard and Xperia
devices used as Fitbit trackers.}
\label{fig:fitbit:structure}
\end{center}
\vspace{-15pt}
\end{figure}

We implemented FitLock in Android. We have tested the tracker side of FitLock
on a Revision C4 of the BeagleBoard~\cite{Beagle} and an Xperia smartphone. The BeagleBoard
uses the OMAP 3530 DCCB72 720 MHz version and uses a 4500 mAh Li-ion battery to
power our system through a special, 2-pin barrel jack. The Sony Ericsson Xperia
X10 mini smartphone features an ARM 11 CPU @600 MHz and 128MB RAM with Android
OS Eclair 2.1. Similar to the Fitbit tracker, the Xperia device supports ANT+.
In addition, we have used two Dell laptops, one equipped with a 2.4GHz Intel
Core i5 and 4GB of RAM, was used for the web server (built on the Apache web
server 2.4) and the other, equipped with a 2.3GHz Intel Core i5 and 4GB of RAM,
was used for the base.

We implemented a client-server Bluetooth~\cite{Bluetooth} socket communication
protocol between the tracker (Xperia smartphone) and the base using
PyBluez~\cite{Pybluez} python library. In PyBluez, each device acts as a server
and other connected devices act as clients in P2P communications. For
connectivity between the base and the webserver, the laptops use their own
802.11b/g Wi-Fi interfaces. Figure~\ref{fig:fitbit:structure} shows a snapshot
of our testbed.

For encryption we experimented with RC4~\cite{RC4}, AES~\cite{AES} and the 
Salsa20~\cite{Salsa20} stream cipher, selected in the final eSTREAM portfolio
~\cite{ESTREAM}. The 20-round Salsa20 is built on a pseudorandom function based 
on 32-bit addition, bitwise addition (XOR) and rotation operations. It uses a 
256-bit key, a 64-bit nonce, and a 64-bit stream position to a 512-bit output.
As of 2012, there are no published attacks on the full Salsa20/20; the best 
attack known~\cite{JSSWC} breaks 8 of the 20 rounds.

\subsection{Results}

\begin{figure}
\centering
\includegraphics[width=2.5in]{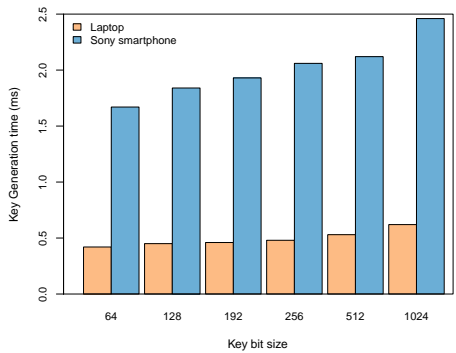}
\caption{Secret key generation time overhead on Xperia device and laptop.}
\label{fig:fitlock:key}
\end{figure}

In the following, all reported values are averages taken over at least 10
independent protocol runs.

\noindent
{\bf Key generation overhead.}
We have first measured the overhead of generating (AES) secret keys.
Figure~\ref{fig:fitlock:key} shows the overhead on the laptop and the Xperia
device, when the key bit size ranges from 64 to 1024 bits. Note that even a
resource constrained smartphone takes only 2.46 ms to generate 1024 bit keys
(0.62ms on the laptop).

\begin{figure*}
\centering
\vspace{-5pt}
\subfigure[]
{\label{fig:fitlock:sony}{\includegraphics[width=2.3in,height=1.9in]{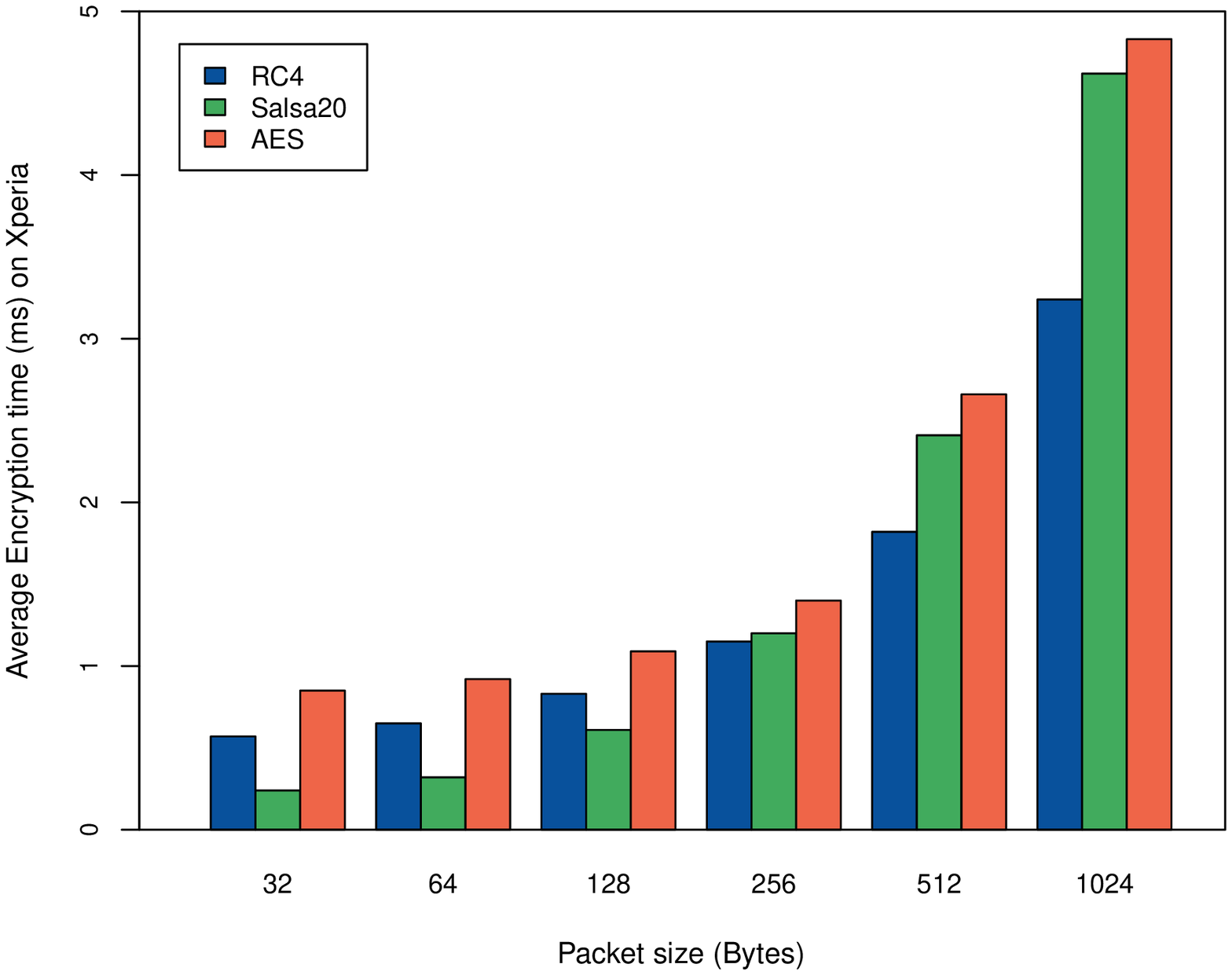}}}
\vspace{-5pt}
\subfigure[]
{\label{fig:fitlock:compare}{\includegraphics[width=2.3in,height=1.9in]{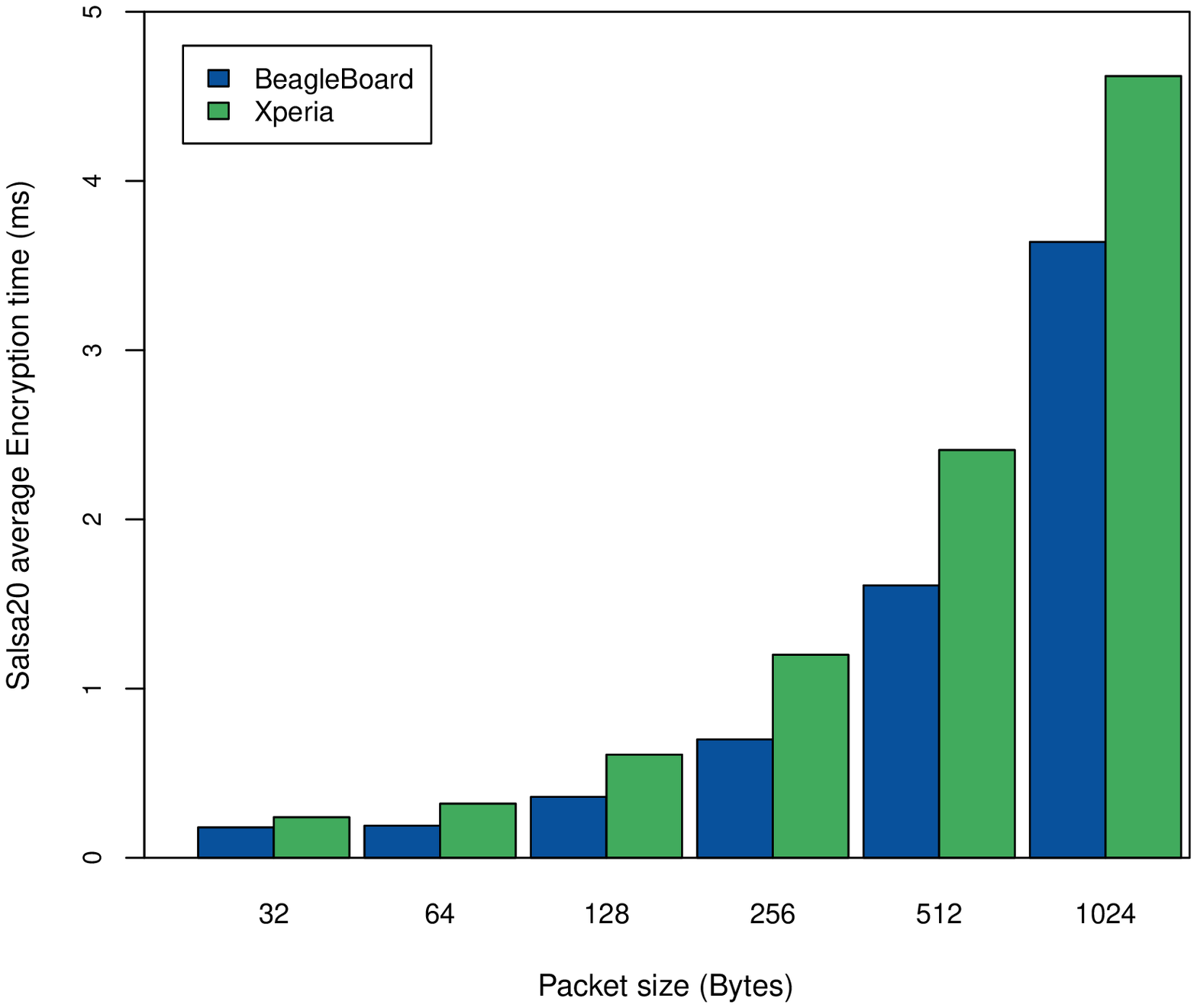}}}
\vspace{-5pt}
\subfigure[]
{\label{fig:fitlock:decrypt}{\includegraphics[width=2.3in,height=2.2in]{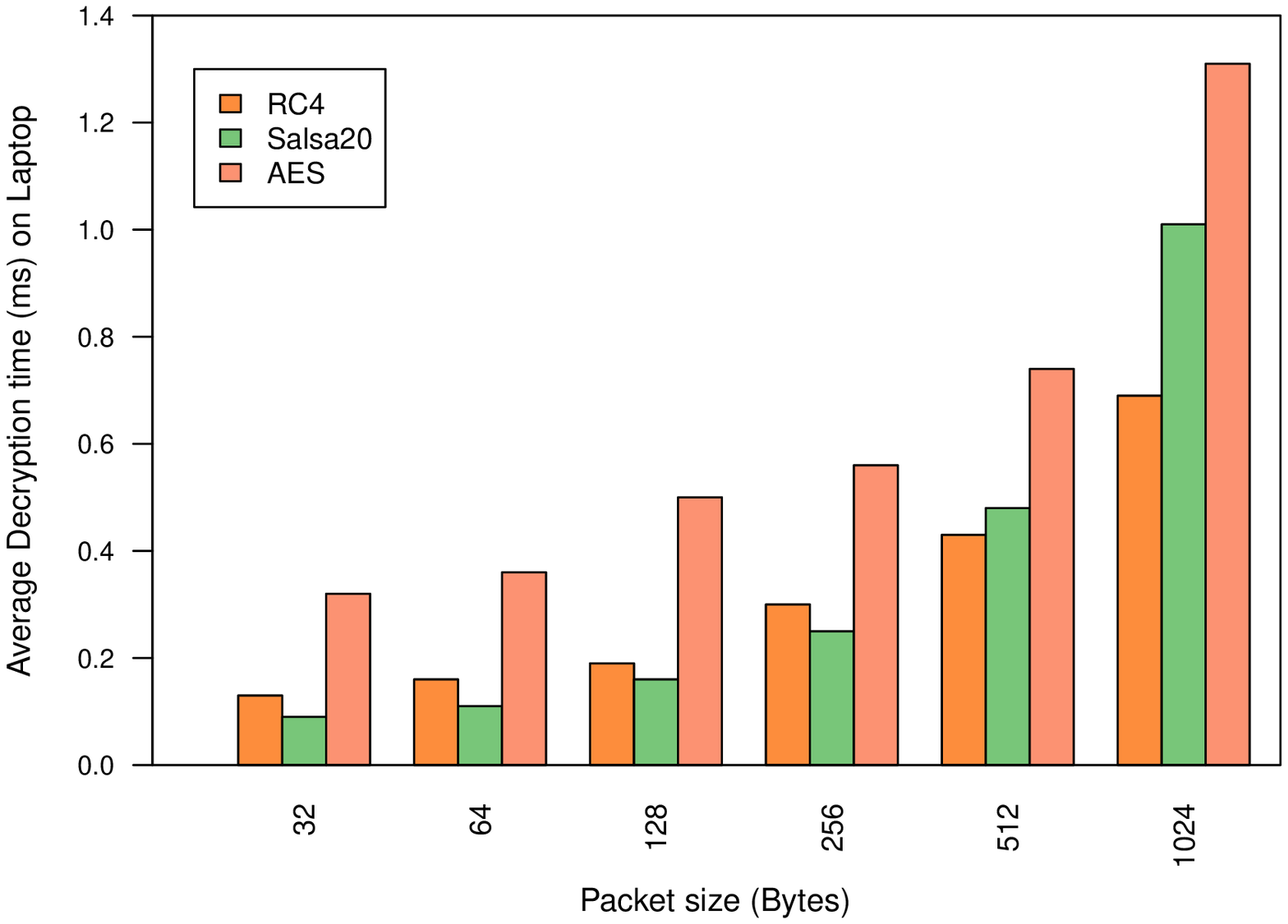}}}
\caption{FitLock overhead.
(a) Encryption time overhead on Xperia.
(b) Comparison of Salsa20 encryption time overhead when tested on BeagleBoard and Xperia.
(c) Decryption time overhead on webserver (Dell laptop).}
\vspace{-15pt}
\end{figure*}

\noindent
{\bf FitLock overhead on tracker.}
A potential bottleneck of FitLock is in the encryption of packets by the
tracker. In order to verify if this can be the case, we compared the
performance of RC4, Salsa20 and AES. We set the key size to 128 bits. We ran
these protocols both on the BeagleBoard and the Xperia while the packet size
ranges from 32 bytes to 1024 bytes.  Figure~\ref{fig:fitlock:sony} shows the
execution time of the three protocols on the Xperia smartphone. For small
packet sizes, Salsa20 performs the best. As the packet size increases, RC4
performs slight better than Salsa20.  Both RC4 and Salsa20 outperform AES for
any packet size. Even for a packet size of 1024 bytes, the average encryption
times for RC4, Salsa20 and AES are only 3.24ms, 4.62ms and 4.83ms respectively.
In Figure~\ref{fig:fitlock:compare}, we compared the encryption overhead of
Salsa20 when running on the BeagleBoard and on the Xperia smartphone. The
BeagleBoard performs better due to its more powerful CPU: it takes only 3.64ms
to encrypt 1024 bytes packets. Thus, both the BeagleBoard and the Xperia device
are able to generate hundreds of packet encryptions per second, making
encryption an unlikely source of bottlenecks.

\noindent
{\bf FitLock overhead on webserver.}
We further examined the packet decryption overhead on the webserver using the
above mentioned protocols. Figure ~\ref{fig:fitlock:decrypt} shows the
dependence of the decryption time on the packet size. RC4 and Salsa20 perform
better than AES. Even for 1024 byte packets, the average decryption overheads
for RC4, Salsa20 and AES are 0.69ms, 1.01ms and 1.31ms respectively.

\begin{figure}
\begin{center}
\includegraphics[width=2.3in]{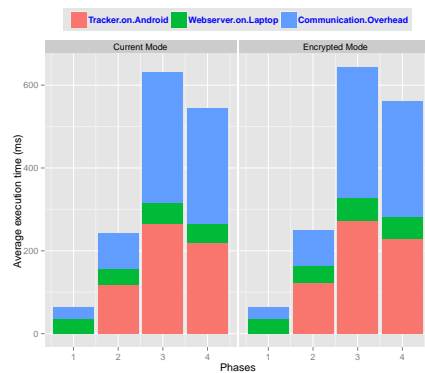}
\caption{Comparison of end-to-end delay between the current Fitbit solution 
and our proposed encrypted solution}
\label{fig:fitbit:end2end}
\end{center}
\vspace{-15pt}
\end{figure}

\noindent
{\bf End-to-end FitLock overhead.}
Finally, we report the measured end-to-end performance of FitLock and compare
it against the performance of Fitbit. We have implemented and tested both
Fitbit and FitLock on our testbed. Figure~\ref{fig:fitbit:end2end} shows our
results split into the times of each of the 4 phases of the
webserver-to-tracker communication protocol described in
Figure~\ref{fig:fitbit:maincomm}.  We have set the secret key size to 256 bits.
The end-to-end (sum over all 4 phases) time of the FitLock protocol is 1518ms.
The total time of Fitbit is 1481ms. Thus, FitLock adds an overhead of 37ms,
accounting for 2.4.\% of Fitbit's time.


\section{Related Work}
\label{sec:related}

%
%
%


Halperin et al.~\cite{Pacemaker} demonstrated attacks on pacemakers and
implantable cardiac defibrillators, and proposed zero-power defenses. 
Similarly, Li et. al.~\cite{Insulin} demonstrated successful security 
attacks on a commercially deployed glucose monitoring and insulin delivery 
system and provided defenses against the proposed attacks. Proximity-based 
access control~\cite{Proximity} has been proposed as a technique for 
implantable medical devices to verify the distance of the communicating 
peer before initiating wireless communication, thereby limiting attackers 
to a certain physical range. Although similar in overall objectives, 
our work differs significantly in the attack methodologies and proposed 
defenses.


Barnickel et al.~\cite{HealthNet} targeted security and privacy issues for
HealthNet, a health monitoring and recording system. They proposed a security
and privacy aware architecture, relying on data avoidance, data minimization,
decentralized storage, and the use of cryptography. Marti et al.~\cite{Ramon} 
described the requirements and implementation of the security mechanisms for 
MobiHealth, a wireless mobile health care system. MobiHealth relies on 
Bluetooth and ZigBee link layer security for communication to the sensors 
and uses HTTPS mutual authentication and encryption for connections to the 
backend.

Lim et al.~\cite{WBAN} analyzed the security of a remote cardiac 
monitoring system. The data transfer was modeled as starting from the 
sensors, reaching a Body Area network (BAN) gateway, then a wireless router 
and through the Internet to a final monitoring server.
Muraleedharan et al.~\cite{Muraleedharan} proposed two types of possible
denial-of-service attacks including Sybil~\cite{Sybil} and
wormhole~\cite{Wormhole} attacks in a health monitoring system using wireless
sensor networks. They further proposed an energy-efficient cognitive routing
algorithm to deal with those attacks.

Sriram et. al.~\cite{PervasiveData} took an in-depth look at potential
health-monitoring usage scenarios and highlighted research challenges required
to ensure and assess quality of sensor data in health-monitoring systems.
The work of Stanford~\cite{PervasiveSec} stresses the need of meeting stringent
privacy and security requirements, especially to protect confidential medical
records and the organizations and end users that employ them.

%
%

\noindent
{\bf Bottom line.}
Most related work either (i) proposes a novel system with embedded defense
mechanisms to handle security and privacy issues or (ii) introduces different
attacks against the wireless network communication in a health monitoring
system.  Our work not only provides hands-on attacks for a popular fitness and
healthcare tracking system but also provides efficient end-to-end defenses and
proves their efficacy in preventing and thwarting attacks. 

\section{Conclusions}
\label{sec:conclusions}

In this paper, we discussed security and privacy issues related to a renowned 
and widely accepted and used fitness tracking system. We showed that through 
reverse engineering of the ANT protocol and data communication, both passive 
and active attacks can be launched on the system using off-the-shelf 
software module. We then analyzed the various attack scenarios and also 
proposed various types of possible defenses against them. We believe that
our proposed attack methodology and defenses may be applicable to several 
wearable and implantable healthcare systems. Healthcare appliance security 
is a critical challenge that demands the immediate attention of the research 
community.

\bibliographystyle{unsrt}
\bibliography{tracker,health,osn}

\end{document}